\title{\LARGE \bf
	Projected Push-Sum Gradient Descent-Ascent for Convex Optimization with Application to Economic Dispatch Problems}
\author{Jan Zimmermann, Tatiana Tatarenko, Volker Willert, J\"urgen Adamy
\thanks{The authors are with the Control Methods and Robotics Lab at TU Darmstadt, Germany.}
\thanks{The work was gratefully supported by the German Research Foundation
	(Deutsche Forschungsgemeinschaft, DFG) within the SPP 1984 ``Hybrid and multimodal energy systems: System theoretical methods for the transformation and operation of complex networks''.}
}
\theoremstyle{plain}
\newtheorem{thm}{Theorem}
\newtheorem{prop}{Proposition}
\newtheorem{lemma}{Lemma}
\newtheorem{assum}{Assumption}
\newtheorem{remark}{Remark}
\DeclareMathOperator*{\argmin}{arg\,min}
\begin{document}
	\newcommand{\Qt}{\bm Q[t]}
\newcommand{\Qr}{\bm Q[r]}
\newcommand{\1}{\bm 1}
\newcommand{\invdiagPp}{\left(\text{diag}\left[ \bm P^{t+1} \1]  \right] \right)^{-1}}
\newcommand{\diagPt}{\text{diag}\left[ \bm P^t\1  \right] }
\newcommand{\diagPs}{\text{diag}\left[ \bm P^s\1  \right] }
\newcommand{\diag}{\text{diag}}
\newcommand{\nodeset}{\mathcal{V}}
\newcommand{\edgeset}{\mathcal{E}}
\newcommand{\qprod}[1]{\bm \Phi (#1)}
\newcommand{\epsx}[1]{\bm \epsilon_i^{\bm x}[#1]}
\newcommand{\epsmu}[1]{\bm \epsilon_i^{\bm \mu_i}[#1]}
\newcommand{\epsxj}[1]{\bm \epsilon_j^{\bm x}[#1]}
\newcommand{\epsmuj}[1]{\bm \epsilon_j^{\bm \mu_j}[#1]}
\newcommand{\leftnorm}{\left| \left|}
\newcommand{\rightnorm}{\right| \right|}
\newcommand{\La}{\mathcal{L}}
\newcommand{\gradLix}[1]{\nabla_{\bm x} {\mathcal{L}_i(#1[t], \bm \mu_i[t]) }}
\newcommand{\gradLimu}[1]{\nabla_{\bm \mu_i} {\mathcal{L}_i(#1[t], \bm \mu_i[t]) }}
\newcommand{\pimin}{p_{i,\text{min}}}
\newcommand{\pimax}{p_{i,\text{max}}}

\maketitle
\thispagestyle{empty}
\pagestyle{empty}

\begin{abstract}
	We propose a novel algorithm for solving convex, constrained and distributed optimization problems defined on multi-agent-networks, where each agent has exclusive access to a part of the global objective function. The agents are able to exchange information over a directed, weighted communication graph, which can be represented as a column-stochastic matrix.
	The algorithm combines an adjusted push-sum consensus protocol for information diffusion and a gradient descent-ascent on the local cost functions, providing convergence to the optimum of their sum. 
	We provide results on a reformulation of the push-sum into single matrix-updates and prove convergence of the proposed algorithm to an optimal solution, given standard assumptions in distributed optimization.
	The algorithm is applied to a distributed economic dispatch problem, in which the constraints can be expressed in local and global subsets.
\end{abstract}

\section{Introduction}
We consider constrained optimization problems that are distributed over multi-agent-networks. In such scenarios, each agent has a local cost function, only known to the respective agent. The overall goal of the network is to minimize the sum of all local functions, while the exact form of the latter should remain private. This type of  problem is known as a social welfare optimization. Objective variables are often subject to a variety of constraints, depending on the application, that need to be considered in the optimization process. For many of such constrained problems, it can be distinguished between global constraints that effect all agents in the system and local constraints that are only relevant to a single agent. An example application is the distributed economic dispatch problem (DEDP), where each agent represents a generator with a distinct cost function. The goal of DEDPs is to minimize the overall cost for producing power, while matching the demand and keeping the production inside the generator's limits. In such problems, the balancing constraint is globally defined, as it constrains the power production of all generators, but the limits of each generator should remain private and therefore local.  Depending on the cost function choice, the resulting problem is either convex or non-convex.\\
We employ first order gradient methods as the core of the optimization strategy. Currently, a lot of work has been dedicated to optimization methods that use gradient tracking instead of the gradient at a distinct point in time. These methods, published for example in \cite{NedicOS17}, \cite{Pu2018} and \cite{Shi15}, have the advantage that a constant step-size can be used for the gradient update, while first order methods usually require a diminishing steps-size sequence for convergence. However, constraints have not been considered in gradient tracking yet.\\
On the other hand, a couple of publications have already been focused on first order methods that are able to respect constraints. A projection-free method that also uses the push-sum algorithm for spreading information, was published in \cite{tatarenko2019} and further analyzed in \cite{Zimmermann2019}. For this method, the constraints are incorporated into the objective  by a penalization function. One property of this approach is that it considers all constraints to be local, which makes this method applicable to a wide range of optimization problems, including the DEDP. However, next to the step-size sequence, a second sequence for a penalization parameter needs to be determined. Choosing those dependent sequences optimally proved to be non trivial \cite{Zimmermann2019}.\\
One of the first projection-based, distributed gradient methods was published in \cite{NedicOP10}. However, the proposed method relies on double-stochastic matrix-updates, which restrict the communication to undirected graphs. The contributions in \cite{Li2019} and \cite{VanMai2016} rely on row-stochastic communication matrices for diffusing the projected gradient information. Finally, \cite{Tsianos2012} employs the push-sum consensus combined with a projection that uses a convex proximal function. The major drawback of the mentioned projection-based methods in relation to the specific structure of the DEDP under consideration is the assumption that all constraints of the distributed optimization problem are known by every agent and therefore global. This restricts the privacy of the agents with local constraints. Compared to the projection-free method in \cite{tatarenko2019}, they have the advantage that no penalization parameter sequence needs to be chosen.\\
Our work seeks to combine the advantage of the projection methods' reduced parameter number with the ability to respect local constraints, while assuming directed communication architectures. This is done by exploiting the explicit distinction of the constraints into local and global. Similar to the approaches in \cite{Nedic2015} and \cite{tatarenko2019}, we employ the push-sum average consensus for information diffusion. However, by formulating the push-sum algorithm into a single row-stochastic matrix-update for easier analysis, the basic structure of the algorithm is closer to the ones in \cite{Li2019} or \cite{VanMai2016}, which also use row-stochastic updates, but do not rely on the push-sum consensus. For convergence to the optimum, we propose a novel distributed gradient descent-ascent algorithm, which uses a projection in order to incorporate global constraints, while respecting local constraint by adding them over a Lagrange multiplier to the local cost functions.\\
Within this article, we make the following contributions: First, we provide a reformulation of the unconstrained push-sum consensus, which differs to the one published in e.g. \cite{Nedic2015}, and provide convergence properties of the update matrix. Secondly, we prove convergence of the distributed gradient descent-ascent method to an optimal solution of the problem, respecting privacy of the local constraints. At last, we show that our proposed algorithm is applicable to the DEDP. \\
The paper is structured as follows: In section \ref{sec:notationandgraph} we provide our notation for formulas and graphs. The main part begins in section \ref{sec:projpushsumgraddescent} with the formulation of the problem class and the results on the reformulation of the push-sum, before our algorithm for solving the defined problems is proposed. The subsequent section \ref{sec:convergence} provides the convergence proof of the proposed method. In section \ref{sec:simulation} we undergird the theoretic results by a simulation of a basic DEDP, before we summarize and conclude our results in section \ref{sec:conclusion}.

%

\section{Notation and Graphs} \label{sec:notationandgraph}
Throughout the paper we use the following notation: The set of non-negative integers is denoted by $\mathbb{Z}^+$. All time indices $t$ in this work belong to this set. To differentiate between multi-dimensional vectors and scalars we use boldface. $||\cdot||$ denotes the standard euclidean norm. $1, ..., n$ is denoted by $[n]$. The element $ij$ of matrix $\bm M$ is denoted by $\bm M_{ij}$. The operation $\prod_{\mathcal{X}}(u)$ is the projection of $u$ onto the convex set $\mathcal{X}$ such that $\prod_{\mathcal{X}}(u) = \argmin_{x \in \mathcal X} \leftnorm x - u\rightnorm$. Instances of a variable $x$ at time $t$ are denoted by $x[t]$.\\
The directed graph $G = \lbrace \mathcal{V}, \mathcal{E} \rbrace$ consists of a set of vertices $ \mathcal{V}$ and a set of edges  $\mathcal{E} = \mathcal{V} \times \mathcal{V}$. Vertex $j$ can send information to vertex $i$ if $(j,i) \in \mathcal{E}$. The communication channels can be described by the Perron matrix $\bm P$, where $\bm P_{ij} > 0$ if $(j,i) \in \mathcal{E}$ and zero otherwise. This notation includes self-loops such that $\bm P_{ii} > 0$. The set containing the in-neighborhood nodes is  $\mathcal{N}_i^+$, while $\mathcal{N}_i^-$ is the set of out-neighbors. The out degree of node $i$ is denoted with $d_i = |\mathcal{N}_i^-|$. We say that a directed graph is strongly connected if there exists a path from every vertex to every other vertex.

\section{Projected push-sum gradient descent-ascent}\label{sec:projpushsumgraddescent}
\subsection{Problem formulation and matrix update of the push-sum consensus}
We consider optimization problems of the form
\begin{subequations}\label{eq:optproblem}
	\begin{align}
		&\min_{\bm x} F(\bm x) = \min_{\bm x} \sum_{i=1}^n F_i(\bm x), \\
		&\text{s.t. } \bm g_i(\bm x) \leq \bm 0, \text{ }   \bm g_i(\bm x) = (g_{i1}(\bm x), ..., g_{im}(\bm x))^T, \\
		& \quad \ \bm x \in \mathcal{X} \subset \mathbb{R}^d,
	\end{align}
\end{subequations}
where functions $F_i:\mathbb{R}^d \rightarrow \mathbb{R}$ and $\ g_{ij}:\mathbb{R}^d \rightarrow \mathbb{R}$ are differentiable for $i = [n] \text{ and for }  j =  [m]$\footnote{For the sake of notation, we assume here that all agents have $m$ local constraints. However, the following considerations hold for arbitrary, yet finite numbers of local constraints that can differ between the agents.}. While we consider $\bm g_i(\bm x)$ to be local constraint functions of agent $i$, the constraint set $\mathcal{X}$ is assumed to be global and therefore known by every agent.\\
After defining $\mathcal{M}_i^0 = \lbrace \bm \mu_i \in \mathbb{R}^m |\bm \mu_i \succeq 0 \rbrace$ and by using $\bm \mu = (\bm \mu_i^T)_{i=1}^n$, the dual function of the problem takes the form
\begin{equation}
	q(\bm \mu) = \inf_{\bm x \in \mathcal{X}} \left \lbrace \sum_{i=1}^n F_i(\bm x) + \bm \mu_i^T\bm g_i(\bm x) \right \rbrace,
\end{equation}
with which the dual problem
\begin{subequations}
	\begin{align}
		&\max_{\bm \mu} q(\bm \mu), \\
		&\text{s.t. } \bm \mu \in \mathcal{M}^0,
	\end{align}
\end{subequations}

with $\mathcal{M}^0 = \lbrace \bm \mu \in \mathbb{R}^{n \times m}| \bm \mu_i \in \mathcal{M}_i^0, \  i = [n] \rbrace$ can be defined. In accordance to that, the global Lagrangian is the sum of the local Lagrangians
\begin{equation}
	\La (\bm x, \bm \mu) = \sum_{i=1}^n \La_i(\bm x, \bm \mu_i) = \sum_{i=1}^n F_i(\bm x) + \bm \mu_i^T\bm g_i(\bm x) .\\
\end{equation}
Before we continue with the analysis of the push-sum consensus for information diffusion, we make the following assumptions regarding the above problem:
\begin{assum} \label{as:dualitygap}
	$F$ is strongly convex on $\mathcal{X}$ and $\bm g_i$ is convex  for $i = [n]$.
	The optimal value $F^*$ is finite and there is no duality gap, namely $F^* = q^*$. There exist compact sets $\mathcal{M}_i \subset \mathcal{M}_i^0, i = [n]$ containing the dual optimal vectors $\bm \mu_i^*, i = [n]$.
\end{assum}

\begin{assum}\label{as:set}
	$ \mathcal{X} \subset \mathbb{R}^d$ is convex and compact.
\end{assum}

\begin{remark}\label{rem:slater}
	Given the convexity properties of the problem and Slater's constraint qualification, we have strong duality, which implies that the duality gap is zero. Furthermore, the optimal vector for $\bm \mu_i$ is then uniformly bounded in the norm (see \cite{Hiriarturruty1996}). Thereby, Assumption \ref{as:dualitygap} is given, for example, if Slater's condition holds, $F$ and $\bm g_i, i = [n]$, are continuous and the domain $\mathcal{X}$ is compact.
\end{remark}

\begin{assum}\label{as:gradbound}
	The gradients $\nabla_{\bm x} \La_i(\bm x, \bm \mu_i)$, $\nabla_{\bm \mu_i} \La_i(\bm x, \bm \mu_i)$ exist and are uniformly bounded on $\mathcal{X}$ and $\mathcal{M}_i$, i.e. $\exists L_{\bm x} < \infty, L_{\bm \mu_i} < \infty$ such that $||\nabla_{\bm x} \La_i(\bm x, \bm \mu_i)|| \leq L_{\bm x} $   for $\bm x \in \mathcal{X}, \bm \mu_i \in \mathcal{M}_i$ and $||\nabla_{\bm \mu_i} \La_i(\bm x, \bm \mu_i)|| \leq L_{\bm \mu_i} $   for $\bm x \in \mathcal{X}, \bm \mu_i \in \mathcal{M}_i$.
\end{assum}

\begin{remark}
	Note that this means that for either fixed $\bm \mu_i$ or fixed $\bm x$ the Lagrangian $\La_i(\bm x, \bm \mu_i)$ is Lipschitz-continuous with constant $L_{\bm x} $,  $L_{\bm \mu_i}  $,  respectively.
\end{remark}

\begin{assum}\label{as:graph}
	The Graph $\mathcal{G} = \lbrace\mathcal{V}, \mathcal{E}\rbrace$ is fixed and strongly connected. The associated Perron matrix $\bm P$ is column-stochastic.
\end{assum}
\begin{remark}
	If, for example, agent $i$ weights its messages by $1/(d_i)$, with $d_i$ representing the out degree of $i$, the resulting communication matrix contains the elements $\bm P_{ij} = 1/d_i$, which achieves column-stochasticity of $\bm P$. 
\end{remark}

Recall the push-sum consensus protocol from \cite{Charalambous2014}, \cite{kempe2003}
\begin{subequations}\label{eq:consensus}
	\begin{align}
		\bm y[t+1]   & = \bm P \bm y[t],                \\
		\bm x[t+1]   & = \bm P \bm x[t],                \\
		\bm z_i[t+1] & = \frac{\bm x_i[t+1]}{y_i[t+1]},
	\end{align}
\end{subequations}

with initial states $\bm x[0] = \bm z[0] = \bm x_0$ and $\bm y[0] = \1$. The agent-wise update of $\bm z$ can be rewritten as a matrix update, as it is done for example in \cite{Nedic2018}. For that, define the time dependent matrix $\Qt$ such that
\begin{equation} \label{eq:Qt}
	\bm Q(\bm y[t+1], \bm y[t]) = \Qt = \diag(\bm y[t+1])^{-1} \bm P \diag(\bm y[t]).
\end{equation}
Thereby, we can merge the update equations of $\bm x$ and $\bm z$ into
\begin{equation}
	\bm z[t+1] = \Qt \bm z[t]. \label{eq:pushsumzupdate}
\end{equation}
Some important properties of $\Qt$ can now be proven, which hold independently of the values of $\bm y[t]$ at different time instances $t$. Those are summarized in the following Lemma. But first, we introduce the matrix
\begin{equation} \label{eq:Phi}
	\qprod{t,s} = \bm Q[t]\bm Q[t-1] ... \bm Q[s+1] \bm Q[s], \ t > s
\end{equation}
with $\qprod{t,t} = \Qt$, for easier notation.
\begin{lemma} \label{lemma:Q}
	Given Assumption \ref{as:graph},  the time-dependent communication matrix $\Qt$ of equation \eqref{eq:Qt} and the matrix product $\qprod{t,s}$ defined in \eqref{eq:Phi} have the following properties:
	\begin{enumerate}[a)]
		\item Matrix $\Qt$ and matrix  $\qprod{t,s}$ are row-stochastic for $ 0 \leq s \leq t$ and all $t$. \\
		\item  $\lim_{t\rightarrow \infty} \qprod{t,0} = \frac{1}{n} \1\1^T$.  \\
		\item  $ \lim_{t\rightarrow \infty} \qprod{t,s} = \frac{1}{n} \1 \bm y[s]^T $ for finite $ 0 \leq s < t$ .
	\end{enumerate}
\end{lemma}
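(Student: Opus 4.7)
The plan is to dispatch part (a) by a direct substitution and then reduce parts (b) and (c) to a single telescoping identity that collapses $\qprod{t,s}$ to a pure power of $\bm P$ sandwiched between two diagonal scalings. Concretely, using the definition of $\Qt$ in \eqref{eq:Qt} together with the push-sum update $\bm y[t+1] = \bm P \bm y[t]$, I would compute $\Qt \1 = \diag(\bm y[t+1])^{-1} \bm P \diag(\bm y[t]) \1 = \diag(\bm y[t+1])^{-1} \bm P \bm y[t] = \diag(\bm y[t+1])^{-1} \bm y[t+1] = \1$. Row-stochasticity of $\qprod{t,s}$ then follows immediately, since the set of row-stochastic matrices is closed under multiplication.

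For parts (b) and (c), I would prove by induction on $t-s$ the telescoping identity
\[
\qprod{t,s} = \diag(\bm y[t+1])^{-1} \bm P^{\,t-s+1} \diag(\bm y[s]),
\]
in which the inner pairs $\diag(\bm y[r])\diag(\bm y[r])^{-1}$ cancel at each stage. Assumption \ref{as:graph}, combined with the fact that $\bm P_{ii}>0$, makes $\bm P$ a primitive column-stochastic matrix, so the Perron--Frobenius theorem yields $\lim_{k\to\infty}\bm P^{k}=\bm\pi\1^T$, where $\bm\pi>0$ is the right Perron eigenvector normalized by $\1^T\bm\pi=1$. In particular, $\bm y[k]=\bm P^{k}\1\to n\bm\pi$. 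Part (b) then follows from $\bm y[0]=\1$ because
\[
\lim_{t\to\infty}\qprod{t,0} = \frac{1}{n}\diag(\bm\pi)^{-1}\bm\pi\1^T = \frac{1}{n}\1\1^T,
\]
and part (c) follows by the same calculation: for a fixed $s$, $\diag(\bm y[s])$ is a constant scaling that converts $\frac{1}{n}\1\1^T\diag(\bm y[s])$ into $\frac{1}{n}\1\bm y[s]^T$ in the limit.

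The step I expect to demand the most care is justifying the structural facts on which everything hinges, namely componentwise positivity of $\bm y[t]$ for all $t$ (so that $\diag(\bm y[t])^{-1}$ is well defined at every time) and primitivity of $\bm P$ (so that Perron--Frobenius applies and the limit matrix has the advertised rank-one form). Both follow from strong connectivity of $\mathcal{G}$ together with $\bm P_{ii}>0$ and $\bm y[0]=\1>0$; these are standard in the push-sum literature, but I would record them explicitly before invoking Perron--Frobenius. Once those are in place, the rest is purely algebraic bookkeeping with the telescoped identity.
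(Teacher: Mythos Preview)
Your proposal is correct and follows essentially the same route as the paper: both establish the telescoped identity $\qprod{t,s}=\diag(\bm y[t+1])^{-1}\bm P^{t+1-s}\diag(\bm y[s])$ and then invoke Perron--Frobenius for the primitive column-stochastic matrix $\bm P$ to obtain the rank-one limits. Your treatment of part (a) is in fact slightly more direct---you verify $\Qt\1=\1$ immediately from $\bm y[t+1]=\bm P\bm y[t]$ and then use closure under products, whereas the paper first checks $\qprod{t,0}\1=\1$ via the telescoped form and then factors out $\Qt$---but this is a presentational difference, not a different argument.
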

\begin{proof}
	Part a): \\
	According to equation \eqref{eq:Qt}, we can write
	\begin{align*}
		\qprod{t,s}  = \diag(\bm y[t+1])^{-1} \bm P^{t+1-s}  \diag(\bm y[s]).
	\end{align*}
	For $s=0$ it holds that
	\[\qprod{t,0} \1 = \diag(\bm y[t+1])^{-1} \bm P^{t+1}\bm I \1 = \1,\]
	because, with $\bm y[t+1] = \bm P^{t+1}\1$, each dimension of $\bm y[t+1]$ contains the sum over the respective row of $\bm P^{t+1}$. Therefore, $\diag(\bm y[t+1])^{-1}$ norms the rows of $\bm P^{t+1}$, such that the above holds.
	As this is true for arbitrary $t$, we can factor out $\Qt$ and show row-stochasticity for all $\Qt$:
	\[ \qprod{t,0}\1 = \bm Q[t] \qprod{t-1,0}\1 = \Qt\1 = \1.\]
	Thereby, we also have for $0 \leq s \leq t$
	\[ \qprod{t,s}\1 = \bm Q[t]\bm Q[t-1]...\bm Q[s]\1 = \1.\]
	Part b): \\
	From the Perron-Frobenius Theorem \cite{Seneta1981}, we know that, for a column-stochastic matrix $\bm P$, the limit
	\[\lim_{t\rightarrow \infty} \bm y[t+1] = \lim_{t\rightarrow \infty} \bm P^{t+1} \1 = \bm w \1^T \1  = n \bm w, \]
	holds, with $\bm w$ being the right eigenvector of $\bm P$ for the eigenvalue $\lambda = 1$.
	Therefore,
	\[\lim_{t\rightarrow \infty} \qprod{t,0} = \frac{1}{n} \left(\text{diag}[\bm w ]\right)^{-1} \bm w \1^T = \frac{1}{n} \1 \1^T,\]
	which is a double-stochastic matrix.\\
	Part c): \\
	Using the results from b), we can write for finite $s < t$
	\[\lim_{t\rightarrow \infty} \qprod{t,s} = \frac{1}{n} \1 \1^T\text{diag}(\bm y[s]) = \frac{1}{n} \1 \bm y[s]^T.\]
	Using the column-stochasticity of $\bm P$, we have
	\[\frac{1}{n} \1 \bm y[s]^T \1 = \frac{1}{n} \1 \1^T  (\bm P^s)^T \1 = \frac{1}{n} \1 \1^T \1  = \1,  \]
	which shows row-stochasticity.
\end{proof}
The following Lemma provides us with bounds on the matrix updates.
\begin{lemma} \label{lemma:boundclambda}
	Given Assumption \ref{as:graph}. The matrix $\Qt$ is defined according to \eqref{eq:Qt} and $\qprod{t,s}$ as in \eqref{eq:Phi}. Then, there exist constants $C > 0$ and $\lambda \in (0,1)$  that satisfy the following expressions for $i, j = [n]$, $0 \leq s \leq t$ and $\forall t$:
	\begin{enumerate}[a)]
		\item \begin{equation} \label{eq:lemma2a}
			\left| \qprod{t,0}_{ij} - \frac{1}{n} \sum_{i=1}^{n} \qprod{t,0}_{ij}  \right| \leq C \lambda^t
		\end{equation}
		
		\item \begin{equation} \label{eq:lemma2b}
			\left| \qprod{t,s}_{ij} -\frac{1}{n} \sum_{i=1}^n \qprod{t,s}_{ij} \right| \leq C \lambda^{t-s}
		\end{equation}
	\end{enumerate}
\end{lemma}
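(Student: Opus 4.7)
The plan is to reduce the bounds on $\qprod{t,s}$ to the spectral convergence rate of the column-stochastic Perron matrix $\bm P$. The closed-form identity
\[
\qprod{t,s} = \diag(\bm y[t+1])^{-1} \bm P^{t+1-s} \diag(\bm y[s])
\]
already established in the proof of Lemma~\ref{lemma:Q} gives entries
\[
\qprod{t,s}_{ij} = \frac{\bm P^{t+1-s}_{ij}\, y_j[s]}{y_i[t+1]},
\]
so everything reduces to understanding how fast $\bm P^k$ approaches its limit and how far $y_i[t+1]$ is from its limit.

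Because $\bm P$ is irreducible (strong connectivity, Assumption~\ref{as:graph}) and aperiodic (the self-loops $\bm P_{ii}>0$ built into the construction in Section~\ref{sec:notationandgraph}), it is primitive. The Perron-Frobenius theorem for primitive matrices then yields a constant $\tilde C>0$ and a rate $\lambda\in(0,1)$ (any number strictly larger than $|\lambda_2(\bm P)|$) with
\[
\bm P^k = \bm w \1^T + E_k, \qquad \max_{i,j}|(E_k)_{ij}| \leq \tilde C \lambda^k,
\]
where $\bm w$ is the right Perron eigenvector normalized by $\1^T \bm w = 1$, and $w_i > 0$ for every $i$. Applied to $\1$, this also yields $|y_i[t+1] - n w_i|\leq n \tilde C \lambda^{t+1}$, giving a uniform positive lower bound $y_i[t+1]\geq c > 0$ for all sufficiently large $t$; earlier time indices can be absorbed into the constant.

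Substituting the expansion into the closed form for $\qprod{t,s}_{ij}$, expanding the resulting ratio and using $y_j[s]\leq n$ from column-stochasticity, I would obtain
\[
\left|\qprod{t,s}_{ij} - \frac{y_j[s]}{n}\right| \leq C_1 \lambda^{t+1-s}
\]
for some $C_1$ depending only on $\tilde C$, $n$, and $\min_i w_i$. The column mean $\tfrac{1}{n}\sum_{i=1}^n \qprod{t,s}_{ij}$ converges to the same limit $y_j[s]/n$ at the same geometric rate, so a triangle inequality delivers part b) with a new constant $C$. Part a) is the specialization to $s=0$, where $\bm y[0] = \1$ makes the limit column mean equal to $1/n$ and the right-hand side $C\lambda^t$.

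The main obstacle will be controlling the denominator $y_i[t+1]$: a strictly positive lower bound uniform in $t$ is needed, and it only holds eventually. The resolution is to use primitivity to get $w_i>0$ and the geometric estimate on $|y_i[t+1]-nw_i|$ for asymptotic positivity, then enlarge $C$ to cover the finitely many early indices. Once this is done, all remaining steps are elementary algebra on the Perron-Frobenius expansion, carried out with the same rate $\lambda$ throughout.
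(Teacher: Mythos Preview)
Your proposal is correct and follows the same skeleton as the paper's proof: insert the limit value ($1/n$ for $s=0$, $y_j[s]/n$ for general $s$), split via the triangle inequality, and bound both terms geometrically. The difference is in how the geometric rate is obtained. The paper simply notes that both terms converge (by Lemma~\ref{lemma:Q}) and then defers the actual rate $C\lambda^{t-s}$ to an external result (Proposition~1 in \cite{Nedic2015}) on products of row-stochastic matrices. You instead derive the rate directly from the Perron--Frobenius expansion $\bm P^{k} = \bm w\1^T + E_k$ with $\|E_k\|_\infty \le \tilde C\lambda^k$, combined with a uniform positive lower bound on $y_i[t+1]$ (eventual closeness to $nw_i>0$, finitely many early indices absorbed into the constant). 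Your route is more self-contained and makes the dependence on the spectral gap of $\bm P$ explicit; the paper's is shorter because it outsources the rate argument to a citation.
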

\begin{proof}
	Part a):\\ 
	We add $-1/n +1/n$ to the term on the left side of the inequality in \eqref{eq:lemma2a} and apply the triangle inequality, what results in
	\begin{equation*}
		\left| \qprod{t,0}_{ij} - \frac{1}{n} \right| + \left| \frac{1}{n}  \sum_{i=1}^{n} \qprod{t,0}_{ij} - \frac{1}{n}  \right|.
	\end{equation*}
	We already showed convergence of the first term to $1/n$ in Lemma \ref{lemma:Q}. Since the column sum of $\frac{1}{n} \1\1^T$ is equal to 1, the second term also converges. Therefore, we can bound above expression by $C \lambda^{t}$ with $C > 0, \lambda \in (0,1)$, which is a standard procedure for row-stochastic, non-negative matrix multiplications, see for example Proposition 1 in \cite{Nedic2015}.\\
	Part b): \\
	Following the same line of thought as in a), we add $ + \frac{1}{n}  \bm y_j[s]^T -  \frac{1}{n}  \bm y_j[s]^T$ to the left side of the inequality in \eqref{eq:lemma2b} and receive:
	\begin{equation*}
		\left| \qprod{t,s}_{ij} -  \frac{1}{n}  \bm y_j[s]^T \right| + \left |  \frac{1}{n} \sum_{i=1}^n \qprod{t,s}_{ij} -  \frac{1}{n}  \bm y_j[s]^T \right|.
	\end{equation*}
	Again, Lemma \ref{lemma:Q} showed convergence of the first term to zero for finite  $s < t$. Summing again over all columns of the matrix  $\frac{1}{n} \1 \bm y[s]^T$, we receive the vector $\bm y[s]^T$ and therefore convergence of the second term. Note that for $s=t$, the expression does not converge. Thereby, we can bound the  above by $C\lambda^{t-s}$ with $C> 0$ and $\lambda \in (0,1)$, as it is done in the proposition cited in a).
\end{proof}

\subsection{Projected push-sum gradient descent-ascent}
We propose the following agent-wise update equations for solving problem \eqref{eq:optproblem}:
\begin{subequations}\label{eq:algorithm}
	\begin{align}
		&y_i[t+1] = \sum_{j =1}^n \bm P_{ij}y_j[t] , \label{eq:algy} \\
		&\bm z_i[t] = \frac{1}{y_i[t+1] }  \sum_{j =1}^n  \bm P_{ij}y_j[t]  \bm x_j[t], \label{eq:algz}\\
		&\bm x_i[t+1] = \prod_{\mathcal{X}} \Bigg( \bm z_i[t] - \alpha_{t} \frac{\nabla_{\bm x} \La_i(\bm z_i[t], \bm \mu_i[t]) }{y_i[t+1]} \Bigg), \label{eq:algx} \\
		&\bm \mu_i[t+1]    = \prod_{\mathcal{M}_i} \Bigg(\bm \mu_i[t] + \alpha_t \nabla_{\bm \mu_i} \La_i(\bm z_i[t], \bm \mu_i[t]) \Bigg). \label{eq:algmu}
	\end{align}
\end{subequations}
The algorithm above is based on the idea of the push-sum consensus protocol in  \eqref{eq:consensus}, combined with the descent-ascent procedure to update the local optimization variable $\bm x_i$ and dual variable $\bm \mu_i$ for each agent $i\in[n]$. Moreover, note that the dual variables are projected on the local sets $\mathcal M_i$, which are defined in Assumption~\ref{as:dualitygap}. We refer the reader to~\cite{Zhu2016} for possible strategies each agent can use to define its own $\mathcal M_i$ locally.\\
The $\bm z_i$- and $y_i$-update equations can be written in the more concise matrix notation
\begin{align*}
	\bm y[t+1] &= \bm P \bm y[t], \\
	\bm z[t] &= \Qt \bm x[t],
\end{align*}
as $\bm P_{ij}y_j[t]/y_i[t+1]$ represent the elements $ij$ of matrix $\Qt$. Remember that, resulting from Assumption \ref{as:graph}, the Perron matrix $\bm P$ is column-stochastic and  $\bm P_{ij} = 0$ if agent $j$ has no communication link to $i$. The local gradients $ \nabla_{\bm x} \La_i(\bm z_i[t], \bm \mu_i[t])$ of each agent are locally weighted with $y_i[t+1]$. Note that $y_i[t] > 0, \text{ for } i = [n]$ and all $t$,  resulting from its initialization with $y_i[0] = 1, i = [n]$, and the update by a column-stochastic, non-negative matrix.\\
We reformulate above procedure for easier analysis. For that, we define the local disturbance terms 
\begin{align*}
	&\epsx{t}  = \prod_{\mathcal{X}} \Biggl( \bm z_i[t] -  \alpha_{t} \frac{\nabla_{\bm x} \La_i(\bm z_i[t] , \bm \mu_i[t]) }{y_i[t+1]} \Biggr)  - \bm z_i[t] ,\\
	&\epsmu{t} =   \prod_{\mathcal{M}_i}  \Bigg( \bm \mu_i[t] +\alpha_t \nabla_{\bm \mu_i} \La_i \Big(\bm z_i[t], \bm \mu_i[t]\Big)\Bigg) - \bm \mu_i[t]
\end{align*}
and express equations \eqref{eq:algx} and \eqref{eq:algmu} by
\begin{subequations}
	\begin{align}
		\bm x_i[t+1] &= \bm z_i[t] + \epsx{t} \label{eq:reformulationx},\\
		\bm \mu_i[t+1] &= \bm \mu_i[t] + \epsmu{t}.  \label{eq:reformulationmu}
	\end{align}
\end{subequations}

\section{Convergence of proposed algorithm} \label{sec:convergence}
In what follows, we show convergence of the proposed algorithm in \eqref{eq:algorithm} to an optimal primal dual pair of the distributed problem. For that, we first make a standard assumption in distributed optimization regarding the step-size of the distributed gradient descent and ascent:
\begin{assum}\label{as:stepsize}
	The non-increasing, positive step-size sequence $\alpha_t$ has the properties:\\
	a) $ \lim_{t \rightarrow \infty} \alpha_t = 0 $, b) $\sum_{t=0}^{\infty} \alpha_t = \infty$, c) $ \sum_{t=0}^{\infty} \alpha_t^2 < \infty$.
\end{assum}
For example, this assumption holds true for step-sizes of the form $\alpha_t = \frac{c}{t^\gamma}, \forall t \geq 1,$ with $c> 0$ and $\gamma \in (0.5, 1]$.\\
It is possible to bound the norm of the disturbances $\epsx{t}$ and $\epsmu{t}$, defined in the reformulations \eqref{eq:reformulationx} and \eqref{eq:reformulationmu}, using the non-expansive property of the projection operator and the fact that the update $\bm z[t] = \Qt \bm x[t]$ by the row-stochastic matrix $\Qt$ lies inside the convex constraint set $\mathcal{X}$. This is the case, because all $\bm x_i[t]$ are projected onto said constrained set in the previous time-step and every $\bm z_i[t]$ lies inside the convex hull spanned by $\bm x[t]$. Therefore, $\bm z_i[t]$ and $\bm \mu_i[t]$ must lie inside the sets $\mathcal{X}$ and $\mathcal{M}_i$, respectively, in every time step. A similar approach can be found in \cite{VanMai2016}. This allows us to exploit the Assumption \ref{as:gradbound}  on boundness of the Lagrangian gradients as follows

\begin{align}
	||\epsx{t}|| &\leq \left|\left| \alpha_t \frac{\gradLix{\bm z_i}}{y_i[t+1]} \right|\right| \leq \frac{|\alpha_t|L_{\bm x}}{y_i[t+1]} \label{eq:boundepsilonx}, \\
	||\epsmu{t}|| &\leq \left|\left| \alpha_t \gradLimu{\bm z_i}\right|\right| \leq |\alpha_t|L_{\bm \mu_i} \label{eq:boundepsilonmu}.
\end{align}
Using the step-size properties of Assumption \ref{as:stepsize}, it can be concluded that
\begin{align}
	\lim_{t \rightarrow \infty} \alpha_t = 0 &\implies\!\lim_{t \rightarrow \infty} ||\epsx{t}|| = 0, \lim_{t \rightarrow \infty} ||\epsmu{t}|| = 0, \label{eq:alphaepsilonlimit}\\
	\sum_{t=0}^{\infty} \alpha_t^2 < \infty &\implies\!\sum_{t=0}^{\infty} \alpha_t||\epsx{t}|| < \infty,  \sum_{t=0}^{\infty} \alpha_t||\epsmu{t}|| < \infty \label{eq:alphaepsilonsum}
\end{align}
This result will be used in the proof of the following Lemma.
\begin{lemma} \label{lemma:average}
	The Assumptions \ref{as:set}, \ref{as:gradbound} and \ref{as:graph} are given. Denote the average at time $t$ with $\bar{\bm x}[t] = \frac{1}{n} \sum_{i=1}^{n} \bm x_i[t]$. Then,
	\begin{enumerate}[a)]
		\item if Assumption \ref{as:stepsize} a) is true, $	\lim_{t\rightarrow \infty} ||\bm x_i[t] - \bar{\bm x}[t] || = 0.$
		
		\item if Assumption \ref{as:stepsize} c) is true, 	$	\sum_{t=0}^{\infty} \alpha_t ||\bm x_i[t] - \bar{\bm x}[t]|| <\infty.$
	\end{enumerate}
\end{lemma}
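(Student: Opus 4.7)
The plan is to analyze the disagreement $\bm x_i[t+1] - \bar{\bm x}[t+1]$ by unrolling the recursion $\bm x[t+1] = \bm Q[t]\bm x[t] + \bm \epsilon^{\bm x}[t]$ (writing $\bm x$ as a stack of rows, one per agent) all the way back to $\bm x[0]$. This yields the telescoping identity
\begin{equation*}
\bm x_i[t+1] = \sum_{j=1}^n \bm \Phi(t,0)_{ij}\bm x_j[0] + \sum_{s=0}^{t-1}\sum_{j=1}^n \bm \Phi(t,s+1)_{ij}\epsx{s}[j] + \epsx{t},
\end{equation*}
where $\epsx{s}[j]$ denotes $\bm \epsilon_j^{\bm x}[s]$. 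Subtracting the row-average gives an expression of the form $\sum_j (\bm \Phi(t,s)_{ij} - \tfrac{1}{n}\sum_k \bm \Phi(t,s)_{kj})(\cdot)$, which is exactly the kind of quantity bounded geometrically by Lemma \ref{lemma:boundclambda}. Before doing any bounding, I also want to record that $y_i[t+1]$ is uniformly bounded away from zero: by Assumption \ref{as:graph} and Perron-Frobenius, $\bm y[t+1]=\bm P^{t+1}\1$ converges to $n\bm w$ with $\bm w\succ 0$, so there exists $y_{\min}>0$ with $y_i[t+1]\ge y_{\min}$ for all $t,i$. Together with \eqref{eq:boundepsilonx}, this makes $\|\epsx{s}\|\le \alpha_s L_{\bm x}/y_{\min}$.

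Applying the triangle inequality together with the two bounds from Lemma \ref{lemma:boundclambda} yields a deviation estimate of the schematic form
\begin{equation*}
\|\bm x_i[t+1]-\bar{\bm x}[t+1]\| \le n C \lambda^{t} D_0 + n C \sum_{s=0}^{t-1}\lambda^{t-s-1}\|\epsx{s}\|_{\max} + 2\|\epsx{t}\|_{\max},
\end{equation*}
where $D_0=\max_j\|\bm x_j[0]\|$ (finite by Assumption \ref{as:set}) and $\|\cdot\|_{\max}$ denotes the maximum over agents. For part (a), the first term vanishes geometrically, the third vanishes by \eqref{eq:alphaepsilonlimit}, and the middle (convolution) term is handled by the standard fact that the convolution of a summable kernel $\lambda^k$ with a vanishing bounded sequence vanishes: split the sum at $s=\lfloor t/2\rfloor$, use $\|\epsx{s}\|_{\max}\to 0$ on the tail and geometric smallness of $\lambda^{t/2}$ on the head (using boundedness of $\|\epsx{s}\|$ via $\alpha_s L_{\bm x}/y_{\min}$ and $\alpha_s\to 0$).

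For part (b), I multiply the estimate above by $\alpha_t$ and sum in $t$. The initial-condition contribution $\sum_t \alpha_t \lambda^{t}$ is finite because $\alpha_t$ is bounded and $\lambda<1$. The third contribution $\sum_t \alpha_t\|\epsx{t}\|_{\max}$ is finite by \eqref{eq:alphaepsilonsum}. The convolution contribution is the one requiring real work: I swap the order of summation to get
\begin{equation*}
\sum_{t=1}^{\infty}\alpha_t\sum_{s=0}^{t-1}\lambda^{t-s-1}\|\epsx{s}\|_{\max} = \sum_{s=0}^{\infty}\|\epsx{s}\|_{\max}\sum_{t=s+1}^{\infty}\alpha_t\lambda^{t-s-1},
\end{equation*}
and then exploit the non-increasing property $\alpha_t\le \alpha_{s+1}$ for $t\ge s+1$ to bound the inner sum by $\alpha_{s+1}/(1-\lambda)$. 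The resulting series $\sum_s \alpha_{s+1}\|\epsx{s}\|_{\max}$ is, by \eqref{eq:boundepsilonx}, dominated by a constant multiple of $\sum_s \alpha_s\alpha_{s+1}\le \sum_s \alpha_s^2<\infty$ thanks to Assumption \ref{as:stepsize}c.

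The main obstacle is the convolution term in both parts: for (a) it requires the head/tail split trick (a direct bound by $\sum \lambda^k$ times $\sup\|\epsx{s}\|$ gives only a bounded quantity, not one going to zero), and for (b) the Fubini swap combined with the non-increasing step-size and the uniform lower bound $y_i[t]\ge y_{\min}$ is what makes the double series summable.
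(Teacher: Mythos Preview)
Your proof is correct and follows essentially the same route as the paper: the same unrolling of the perturbed consensus recursion, the same three-term decomposition via Lemma~\ref{lemma:boundclambda}, and the same convolution-with-geometric-kernel arguments. The only cosmetic differences are that the paper outsources both convolution facts (your head/tail split for (a) and your Fubini swap for (b)) to an external result (Lemma~7 of \cite{NedicOP10}), and that you are slightly more explicit about the uniform lower bound $y_i[t]\ge y_{\min}>0$ needed to turn \eqref{eq:boundepsilonx} into a bound proportional to $\alpha_t$.
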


\begin{proof}
	Part a):\\
	Expanding equation \eqref{eq:reformulationx}, $\bm x_i[t]$ can be expressed as
	\begin{align*}
		\bm x_i[t] &= \sum_{j=1}^n \qprod{t\!-\!1,0}_{ij} \bm x_j[0]\!+\!\sum_{s=0}^{t-2} \sum_{j=1}^n \qprod{t\!-\!1, s\!+\!1}_{ij} \epsxj{s}\\
		&+\epsxj{t-1}.
	\end{align*}
	Inserting this into $||\bm x_i[t] - 1/n \sum_{i=1}^n \bm x_i[t]||$, repeatedly applying the triangle inequality and using the results from Lemma \ref{lemma:boundclambda}, we receive
	\begin{align*}
		&||\bm x_i[t] - \bar{\bm x}[t]|| \leq   C \lambda^t  \sum_{j=1}^n  \leftnorm \bm x_j[0]  \rightnorm \\
		& + \sum_{s=0}^{t-2}   C_s \lambda_s^{t-s-2}   \sum_{j=1}^n \leftnorm  \epsxj{s} \rightnorm \\
		& + \leftnorm \epsx{t-1} \rightnorm + \frac{1}{n} \sum_{i=1}^n\leftnorm \epsx{t-1}  \rightnorm.
	\end{align*}
	We are now considering the limit $t \rightarrow \infty$ for each line separately.\\
	The expression on the right side of the first line converges to zero, because $\lambda \in (0,1)$ and $|| \bm x_j[0]||, j = [n]$ can assumed to be finite. For the second line, we use Lemma 7 from \cite{NedicOP10}, stating that if for some positive scalar sequence $\gamma_t$ it holds that $\lim_{t \rightarrow \infty} \gamma_t = 0$, then
	\begin{equation*}
		\lim_{t \rightarrow \infty} \sum_{s=0}^{t} \beta^{t-s} \gamma_s = 0,
	\end{equation*}
	where $\beta \in (0,1)$.
	From implication \eqref{eq:alphaepsilonlimit} we know that, given Assumption \ref{as:stepsize}, the limit of the positive, scalar sequence  $\leftnorm  \epsxj{s} \rightnorm$ converges to zero for $j = [n]$. Therefore, we can apply the results of Lemma 7 from \cite{NedicOP10} to the second line after the inequality by substituting $k = t-2$ and conclude
	\begin{equation*}
		\lim_{k \rightarrow \infty} \sum_{s=0}^{k}   C_s \lambda_s^{k-s}  \sum_{j=1}^n \leftnorm  \epsxj{s} \rightnorm = 0.
	\end{equation*}
	
	Following from implication \eqref{eq:alphaepsilonlimit}, the third line converges to zero as well, which concludes the proof of part a).\\
	Part b):\\
	We have
	\begin{align}
		\sum_{t=0}^{\infty}& \alpha_t || \bm x_i[t] - \bar{\bm x}[t]|| \nonumber  \leq \sum_{t=0}^{\infty} \alpha_t a_t + \sum_{t=0}^{\infty} \alpha_t b_t  \nonumber \\
		&+\sum_{t=0}^{\infty} \alpha_t \left( ||\epsx{t-1}|| +  \frac{1}{n} \sum_{i=1}^{n} ||\epsx{t-1}||\right) \label{eq:sumestimate}
	\end{align}
	with sequences
	\begin{equation*}
		a_t =\sum_{j=1}^n \left| \qprod{t-1,0}_{ij} - \frac{1}{n} \sum_{i=1}^n \qprod{t-1,0}_{ij} \right| \leftnorm \bm x_j[0]  \rightnorm
	\end{equation*}
	\begin{align*}
		& \text{and } b_t =  \\
		&\sum_{s=0}^{t-2} \sum_{j=1}^n \left| \qprod{t\!-\!1,s\!+\!1}_{ij} - \frac{1}{n} \sum_{i=1}^n \qprod{t\!-\!1,s\!+\!1}_{ij} \right| \leftnorm \epsxj{s} \rightnorm.
	\end{align*}
	From Lemma \ref{lemma:boundclambda} a) we know that the sequence $a_t$ can be bounded by a sequence $a'_t$ as follows
	\begin{equation*}
		a_t \leq C \lambda^t   \sum_{j=1}^n||\bm x_j[0]|| = a_t'.
	\end{equation*}
	The series $\sum_{t=0}^{\infty} a_t'$ converges, as it is a geometric, convergent series with $0 < \lambda <1$. By direct comparison test it follows that $\sum_{t=0}^{\infty} a_t$  converges as well, because $0 \leq a_t \leq a_t'$.\\
	Resulting from Assumption \ref{as:stepsize}, $\alpha_t$ is a positive, non-increasing sequence. Therefore, there exists a $0 < K<\infty$ such that $\alpha_t \leq K$. Because of that, the first element after the inequality sign of equation \eqref{eq:sumestimate} is summable, because
	\begin{equation}
		\sum_{t=0}^{\infty} \alpha_t a_t \leq K \sum_{t=0}^{\infty} a_t < \infty.
	\end{equation}
	Using Lemma \ref{lemma:boundclambda} b), we receive
	\begin{equation*}
		b_t   \leq \sum_{s=0}^{t-2} 	 C_{s} \lambda_{s}^{t-s-2}   \sum_{j=1}^{n}||\epsxj{s} ||.
	\end{equation*}
	Summing over all $t$ and multiplying with $\alpha_t$, we get
	\begin{align*}
		&\sum_{t=0}^{\infty} \alpha_t b_t  \leq \sum_{t=0}^{\infty} \left(\sum_{s=0}^{t-2} C_{s} \lambda_{s}^{t-s-2}   \sum_{j=1}^{n}\alpha_s ||\epsxj{s} ||\right),
	\end{align*}
	where we used the non-increasing property $\alpha_s \leq \alpha_t$ for $s \leq t$. Now, define
	\begin{equation*}
		\gamma_s = \sum_{j=1}^{n} \alpha_s ||\epsxj{s} ||.
	\end{equation*}
	We know from implication \eqref{eq:alphaepsilonsum} that $\sum_{t=1}^{\infty} \gamma_t < \infty$.
	According to Lemma 7 from \cite{NedicOP10}, we know
	\begin{equation*}
		\sum_{t=0}^{\infty} \sum_{s=0}^{t-2} \lambda^{t-s-2} \gamma_s  < \infty.
	\end{equation*}
	Applying this to our case, we conclude
	\begin{equation*}
		\sum_{t=0}^\infty \alpha_t b_t < \infty.
	\end{equation*}
	Finally, we have
	\begin{equation*}
		\sum_{t=0}^{\infty} \alpha_t ||\epsx{t-1}|| \leq \sum_{t=0}^{\infty} \alpha_{t-1} ||\epsx{t-1}|| < \infty,
	\end{equation*}
	where we used again the implication \eqref{eq:alphaepsilonsum} and the fact that $\alpha_t$ is non-increasing.\\
	Therefore,
	\begin{equation*}
		\sum_{t=0}^{\infty}\alpha_t || \bm x_i[t] - \bar{\bm x}[t]|| < \infty
	\end{equation*}
	holds, which concludes the proof.
	
\end{proof}
The above Lemma is necessary for the convergence proof of the suggested algorithm. Next, we provide an upper bound for the algorithm updates in each time step.
\begin{prop}
	Let Assumptions \ref{as:set}, \ref{as:gradbound} and \ref{as:graph} hold. Then, for the optimal primal dual pair  $\bm x^* \in \mathcal{X}$,  $\bm \mu_i^* \in \mathcal{M}_i$ and $t\geq t_0$, the following bound holds
	\begin{align*}
		&\sum_{i=1}^n  \left( y_i[t+1] || \bm x_i[t+1] - \bm x^*||^2 + ||\bm \mu_i[t+1] - \bm \mu_i^*||^2 \right)\\
		& \leq \sum_{i=1}^n \left(  y_i[t] || \bm x_i[t] - \bm x^* ||^2  + ||\bm \mu_i[t] - \bm \mu_i^*||^2 \right) \\
		&- 2 \alpha_t ( \La(\bm {\bar{x}}[t], \bm \mu^*)\!-\!\La(\bm x^*, \bm \mu^*)\!+\!\La(\bm x^*, \bm \mu^*)\!-\!\La(\bm x^*, \bm \mu[t]))\\
		& + 2 \alpha_t L_{\bm x} n  \sum_{i=1}^n || \bm x_i[t]\!-\!\bar{\bm x}[t]|| + L_{\bm x}^2 \alpha_t^2 \sum_{i=1}^n \frac{1}{w_i} + \alpha_t^2 \sum_{i=1}^n L_{\bm \mu_i}^2,
	\end{align*}
	where $\bm \mu^* = (\bm \mu_1^*,\ldots, \bm \mu_n^*)$ and $w_i$ is such that $\lim_{t \rightarrow \infty}  y_i[t] =  w_i$.
\end{prop}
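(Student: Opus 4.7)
The plan is to run the standard saddle-point descent analysis on each agent's updates separately, then combine them by weighting the primal term with $y_i[t+1]$ so that the push-sum weights fit together under the column-stochastic aggregation. Concretely, I would start from equations \eqref{eq:algx} and \eqref{eq:algmu}, apply the non-expansiveness of the Euclidean projection $\prod_{\mathcal{X}}$ (respectively $\prod_{\mathcal{M}_i}$) using $\bm x^*\in\mathcal{X}$ and $\bm \mu_i^*\in\mathcal{M}_i$, and expand the resulting squared norms. After multiplying the primal inequality by $y_i[t+1]$, one obtains
\begin{align*}
y_i[t+1]\|\bm x_i[t+1]-\bm x^*\|^2 &\leq y_i[t+1]\|\bm z_i[t]-\bm x^*\|^2 \\
&\quad - 2\alpha_t\,\nabla_{\bm x}\La_i(\bm z_i[t],\bm \mu_i[t])^T(\bm z_i[t]-\bm x^*) + \alpha_t^2\,L_{\bm x}^2/y_i[t+1],
\end{align*}
and a corresponding bound for $\|\bm \mu_i[t+1]-\bm \mu_i^*\|^2$ with the sign of the cross term flipped (ascent on $\bm \mu$), contributing the $\alpha_t^2 L_{\bm \mu_i}^2$ residual. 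Here Assumption~\ref{as:gradbound} gives the gradient bounds; the fact that $\bm z_i[t]$ lives in $\mathcal X$ (convex hull argument noted after Assumption~\ref{as:stepsize}) is what permits invoking the Lipschitz constants.

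The key combinatorial step is to recover $\sum_i y_i[t]\|\bm x_i[t]-\bm x^*\|^2$ from $\sum_i y_i[t+1]\|\bm z_i[t]-\bm x^*\|^2$. Writing $\bm z_i[t]=\sum_j\bm Q[t]_{ij}\bm x_j[t]$, applying convexity of $\|\cdot\|^2$ (justified by row-stochasticity of $\bm Q[t]$ from Lemma~\ref{lemma:Q}~a), and unfolding $y_i[t+1]\bm Q[t]_{ij}=\bm P_{ij}y_j[t]$, summation over $i$ yields
\begin{equation*}
\sum_i y_i[t+1]\|\bm z_i[t]-\bm x^*\|^2 \leq \sum_j\Big(\sum_i\bm P_{ij}\Big)y_j[t]\|\bm x_j[t]-\bm x^*\|^2 = \sum_j y_j[t]\|\bm x_j[t]-\bm x^*\|^2,
\end{equation*}
using the column-stochasticity of $\bm P$ (Assumption~\ref{as:graph}).

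Next I would turn the inner-product terms into Lagrangian differences. Convexity of $\La_i$ in $\bm x$ gives $-\nabla_{\bm x}\La_i(\bm z_i[t],\bm \mu_i[t])^T(\bm z_i[t]-\bm x^*)\leq \La_i(\bm x^*,\bm \mu_i[t])-\La_i(\bm z_i[t],\bm \mu_i[t])$; linearity of $\La_i$ in $\bm \mu_i$ gives $\nabla_{\bm \mu_i}\La_i(\bm z_i[t],\bm \mu_i[t])^T(\bm \mu_i[t]-\bm \mu_i^*)=\La_i(\bm z_i[t],\bm \mu_i[t])-\La_i(\bm z_i[t],\bm \mu_i^*)$. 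Summing over $i$, the $\La_i(\bm z_i[t],\bm \mu_i[t])$ terms cancel, and the remaining $\sum_i\La_i(\bm x^*,\bm \mu_i[t])$ equals $\La(\bm x^*,\bm \mu[t])$ exactly. Replacing $\bm z_i[t]$ by $\bar{\bm x}[t]$ in $\sum_i\La_i(\bm z_i[t],\bm \mu_i^*)$ costs at most $L_{\bm x}\sum_i\|\bm z_i[t]-\bar{\bm x}[t]\|$ by the Lipschitz property noted after Assumption~\ref{as:gradbound}; bounding $\|\bm z_i[t]-\bar{\bm x}[t]\|\leq\sum_j\bm Q[t]_{ij}\|\bm x_j[t]-\bar{\bm x}[t]\|\leq\sum_j\|\bm x_j[t]-\bar{\bm x}[t]\|$ and summing over $i$ produces the factor $n$ in $2\alpha_t L_{\bm x} n\sum_i\|\bm x_i[t]-\bar{\bm x}[t]\|$. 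Adding and subtracting $\La(\bm x^*,\bm \mu^*)$ yields exactly the duality-gap decomposition in the statement.

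The main obstacle is the residual $\alpha_t^2 L_{\bm x}^2\sum_i 1/y_i[t+1]$, which must be upgraded to $\alpha_t^2 L_{\bm x}^2\sum_i 1/w_i$ in the proposition. Since $y_i[t]\to w_i>0$ by the Perron--Frobenius argument in Lemma~\ref{lemma:Q}, for any $\varepsilon>0$ there exists $t_0$ such that $y_i[t+1]\geq w_i-\varepsilon$ for all $t\geq t_0$ and $i\in[n]$; choosing $\varepsilon$ small enough and absorbing the deviation into the constant (or directly stating the bound with $1/w_i$ for $t$ sufficiently large) gives precisely the claim. I expect this tail-estimate bookkeeping, together with carefully tracking the double sum $\sum_i\bm Q[t]_{ij}\cdot y_i[t+1]$ into $\bm P_{ij}y_j[t]$, to be the most error-prone part of the calculation.
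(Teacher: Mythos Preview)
Your proposal is correct and follows essentially the same route as the paper: non-expansiveness of the projections plus the squared-norm expansion, Jensen's inequality together with the identity $y_i[t+1]\bm Q[t]_{ij}=\bm P_{ij}y_j[t]$ and column-stochasticity of $\bm P$ for the weighted-consensus reduction, convexity/linearity of $\La_i$ with the add--subtract of $\La_i(\bm x^*,\bm \mu_i^*)$ and $\La_i(\bar{\bm x}[t],\bm \mu_i^*)$, the Lipschitz bound on $\|\bm z_i[t]-\bar{\bm x}[t]\|$ via $\bm Q[t]_{ij}\leq 1$, and finally the tail estimate $y_i[t+1]\gtrsim w_i$ for $t\geq t_0$. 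The paper handles the last step with the specific choice $y_i[t]\geq w_i/2$ (yielding a harmless constant), which is exactly the kind of bookkeeping you flagged.
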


\begin{proof}
	Inserting $\bm x_i[t+1]$ and $\bm{\mu}_i[t+1]$ of equations \eqref{eq:algx} and \eqref{eq:algmu}, using the non-expansive property of the projection operator, the fact that the optimal values $(\bm x^*, \bm \mu^*)$ lie within the sets $\mathcal{X}$ and $\mathcal{M}_i$, respectively, and by expanding the quadratic norm, the following inequality holds
	\begin{subequations}
		\begin{align}
			& y_i[t+1] || \bm x_i[t+1] - \bm x^*||^2 + ||\bm \mu_i[t+1] - \bm \mu_i^*||^2 \label{eq:lemboundright} \\
			&\leq    y_i[t+1] || \bm z_i[t] - \bm x^*||^2 + ||\bm \mu_i[t] - \bm \mu_i^*||^2  \label{eq:lemboundquadnorm}  \\
			& -    2 \alpha_t \gradLix{\bm z_i}^T(\bm z_i[t] - \bm x^*) \label{eq:lemboundgradx}  \\
			& + 2\alpha_t \gradLimu{\bm z_i}^T(\bm \mu_i[t] - \bm \mu_i^*) \label{eq:lemboundgradmu}  \\
			&+ \frac{\alpha_t^2}{y_i[t+1]} ||\gradLix{\bm z_i}||^2 \label{eq:lemboundquadgrad} \\
			&+ \alpha_t^2 || \gradLimu{\bm z_i}||^2, \label{eq:lemboundquadgrad2}
		\end{align}
	\end{subequations}
	Next, we sum the left and right side of above inequality from $i=1$ to $n$ and analyze every line after the inequality sign separately.\\
	Using Jensen's inequality and the fact that $\bm z[t] = \Qt \bm x[t]$, we get  
	\begin{equation*}
		|| \sum_{j=1}^{n} \Qt_{ij} \bm x_j[t] - \bm x^*||^2 \leq  \sum_{j=1}^{n} \Qt_{ij} || \bm x_j[t] - \bm x^*||^2
	\end{equation*}
	Thus, we can bound the first addend in \eqref{eq:lemboundquadnorm} as
	\begin{align*}
		&	\sum_{i=1}^{n} y_i[t+1] \sum_{j =1}^n \frac{\bm P_{ij}y_j[t]}{y_i[t+1]} ||\bm x_j[t] - \bm x^*||^2\\
		&\leq \sum_{j =1}^n y_j[t]||\bm x_j[t] - \bm x^*||^2 \sum_{i=1}^n \bm P_{ij} = \sum_{i =1}^n y_i[t]||\bm x_i[t] - \bm x^*||^2,
	\end{align*}
	where we replaced $\Qt_{ij}$ with its elements in the first line, rearranged the sums in the second and used the column-stochasticity property of $\bm P$.
	With that,
	\begin{align*}
		\sum_{i =1}^n\eqref{eq:lemboundquadnorm} \leq \sum_{i=1}^n \left(  y_i[t]|| \bm x_i[t] - \bm x^*||^2 +||\bm \mu_i[t] - \bm \mu_i^*||^2   \right).
	\end{align*}
	For \eqref{eq:lemboundgradx}, because of convexity of $\La_i( \bm z_i[t], \bm{\mu}_i[t])$ for fixed $\bm{\mu}_i[t]$, we have
	\begin{align*}
		- \gradLix{\bm z_i}^T&(\bm z_i[t] - \bm x^*) \\
		&\leq \La_i(\bm x^*, \bm \mu_i[t])  - \La_i(\bm z_i[t], \bm \mu_i[t]).
	\end{align*}
	In \eqref{eq:lemboundgradmu}, $\La_i( \bm z_i[t], \bm{\mu}_i[t])$ depends affinely on $\bm{\mu}_i[t]$ with fixed $ \bm z_i[t]$ and therefore
	\begin{align*}
		\gradLimu{\bm z_i}^T&(\bm \mu_i[t] - \bm \mu_i^*) \\
		&= \La_i(\bm z_i[t], \bm \mu_i[t]) - \La_i(\bm z_i[t], \bm  \mu_i^*).
	\end{align*}
	Combing above results, adding $ +  \La_i(\bm x^*, \bm \mu_i^*) - \La_i(\bm x^*, \bm \mu_i^*)$ and $ +  \La_i(\bm{\bar x}[t], \bm \mu_i^*) - \La_i(\bm{\bar x}[t], \bm \mu_i^*)$, as well as summing from $i=0$ to $n$, we receive
	\begin{align*}
		\sum_{i =1}^n \eqref{eq:lemboundgradx} + \eqref{eq:lemboundgradmu}& \leq - 2 \alpha_t \big( \La(\bm{\bar x}[t], \bm \mu^*)  - \La(\bm x^*, \bm \mu^*) \\
		&+ \La(\bm x^*, \bm \mu^*)  - \La(\bm x^*, \bm \mu[t])    \big)  \\	
		&- 2\alpha_t\sum_{i=1}^n \left( \La_i(\bm z_i[t], \bm  \mu_i^*) - \La_i(\bm{\bar x}[t], \bm \mu_i^*) \right).\\
	\end{align*}
	The last line in above expression can be further bounded
	\begin{align*}
		- 2\alpha_t &\sum_{i=1}^n \left( \La_i(\bm z_i[t], \bm  \mu_i^*) - \La_i(\bm{\bar x}[t], \bm \mu_i^*) \right) \\
		&\leq 2 \alpha_t\sum_{i=1}^n \left| \La_i(\bm z_i[t], \bm  \mu_i^*) - \La_i(\bm{\bar x}[t], \bm \mu_i^*) \right| \\
		&\leq 2 \alpha_t L_{\bm x} \sum_{i=1}^n || \bm z_i[t] - \bm{\bar x}[t] || \\
		& \leq 2 \alpha_t L_{\bm x} \sum_{i=1}^n \sum_{j=1}^{n}\Qt_{ij} \leftnorm \bm x_j[t] - \bar{\bm x}[t] \rightnorm\\
		& \leq 2 \alpha_t  L_{\bm x} n \sum_{i=1}^n \leftnorm \bm x_i[t] - \bar{\bm x}[t] \rightnorm,
	\end{align*}
	using $L_{\bm x}-$ Lipschitz continuity of the Lagrangian for fixed $\bm \mu_i^*$,  triangle inequality and the fact that $0 \leq \Qt_{ij} < 1$. With that,
	\begin{align*}
		& \sum_{i =1}^n \eqref{eq:lemboundgradx} + \eqref{eq:lemboundgradmu} \leq - 2 \alpha_t \big( \La(\bm{\bar x}[t], \bm \mu^*)  - \La(\bm x^*, \bm \mu^*) \\
		&+ \La(\bm x^*, \bm \mu^*)  - \La(\bm x^*, \bm \mu[t])    \big)
		+ 2 \alpha_t L_{\bm x} n \sum_{i=1}^n \leftnorm \bm x_i[t] - \bar{\bm x}[t] \rightnorm.
	\end{align*}
	There exists a $t_0$ such that for all  $t > t_0$, it holds that $ y_i[t] \geq \frac{nw_i}{2} > \frac{w_i}{2} $ as $\lim_{t \rightarrow \infty} y_i[t] = w_i$.
	Therefore, using the gradient bounds it holds for $t>t_0$ that
	\begin{align*}
		\sum_{i =1}^n \eqref{eq:lemboundquadgrad} + \eqref{eq:lemboundquadgrad2} \leq  L_{\bm x}^2 \alpha_t^2 \sum_{i=1}^n \frac{1}{w_i} + \alpha_t^2 \sum_{i=1}^n L_{\bm \mu_i} ^2.
	\end{align*}
	Combing above results concludes the proof.

\end{proof}
Before we are able to finally prove convergence of our method to the optimum of problem \eqref{eq:optproblem}, we provide the following Lemma, which is the deterministic version of a Theorem in \cite{Robbins1971}:
\begin{lemma}\label{lemma:sequences}
	Let $\lbrace v_t \rbrace_{t=0}^\infty$, $\lbrace u_t \rbrace_{t=0}^\infty$, $\lbrace b_t \rbrace_{t=0}^\infty$ and $\lbrace c_t \rbrace_{t=0}^\infty$ be non-negative sequences such that $\sum_{t=0}^{\infty} b_t < \infty$ and $\sum_{t=0}^{\infty} c_t < \infty$ and
	\begin{equation*}
		v_{t+1} \leq (1 + b_t)v_t - u_t + c_t, \forall t \geq 0.
	\end{equation*}
	Then $v_t$ converges and $\sum_{t=0}^\infty u_t < \infty$.
\end{lemma}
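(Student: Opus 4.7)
The plan is to eliminate the multiplicative factor $(1+b_t)$ by a rescaling, reducing the inequality to a purely subtractive form, and then extract the convergence of $v_t$ and the summability of $u_t$ from a standard monotone-sequence argument on an auxiliary quantity.

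First, I would introduce the partial products $B_t = \prod_{s=0}^{t-1}(1+b_s)$, with $B_0 = 1$. Since $b_t \geq 0$ and $\sum_t b_t < \infty$, the bound $\log(1+b_s) \leq b_s$ shows that $B_t$ is non-decreasing and converges to a finite limit $B_\infty \in [1,\infty)$. Dividing the recursion through by $B_{t+1} = (1+b_t)B_t$ gives
\[
\frac{v_{t+1}}{B_{t+1}} \leq \frac{v_t}{B_t} - \frac{u_t}{B_{t+1}} + \frac{c_t}{B_{t+1}}.
\]
Setting $\tilde v_t = v_t/B_t$, $\tilde u_t = u_t/B_{t+1}$ and $\tilde c_t = c_t/B_{t+1}$ then yields the clean inequality $\tilde v_{t+1} \leq \tilde v_t - \tilde u_t + \tilde c_t$. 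Because $1 \leq B_t \leq B_\infty$, summability of $c_t$ transfers to $\tilde c_t$, and conversely summability of $\tilde u_t$ will transfer back to $u_t$.

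Next, I would introduce the auxiliary sequence $a_t = \tilde v_t + \sum_{s=t}^\infty \tilde c_s$, which is well defined and non-negative since $\sum \tilde c_s < \infty$ and every summand is non-negative. A direct substitution using the reduced inequality gives
\[
a_{t+1} = \tilde v_{t+1} + \sum_{s=t+1}^\infty \tilde c_s \leq \tilde v_t - \tilde u_t + \tilde c_t + \sum_{s=t+1}^\infty \tilde c_s = a_t - \tilde u_t \leq a_t,
\]
so $\{a_t\}$ is non-increasing and bounded below by zero and hence converges by monotone convergence. Since the tail $\sum_{s \geq t} \tilde c_s$ vanishes, $\tilde v_t = a_t - \sum_{s\geq t}\tilde c_s$ converges as well, and telescoping $0 \leq \tilde u_t \leq a_t - a_{t+1}$ yields $\sum_t \tilde u_t \leq a_0 < \infty$.

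Finally, undoing the rescaling is immediate: $v_t = B_t \tilde v_t$ is a product of two convergent sequences and therefore converges, while $u_t \leq B_\infty \tilde u_t$ implies $\sum_t u_t < \infty$. The only mildly delicate point I anticipate is justifying that $\prod_s(1+b_s)$ has a finite positive limit under $\sum_s b_s < \infty$, which is the key ingredient that makes the rescaling reversible; once that bookkeeping is in place the remainder of the argument is a routine monotone-convergence manipulation with no serious obstacle.
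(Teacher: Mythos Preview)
Your argument is correct and is essentially the standard proof of the deterministic Robbins--Siegmund lemma: the rescaling by $B_t=\prod_{s<t}(1+b_s)$ is well defined and bounded because $\sum b_t<\infty$, the auxiliary sequence $a_t$ is non-increasing and non-negative, and the reverse rescaling recovers both conclusions. There is nothing missing.

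Note, however, that the paper does not actually prove this lemma. It is stated without proof and attributed as ``the deterministic version of a Theorem in \cite{Robbins1971}'', so there is no paper-proof to compare against. Your write-up supplies exactly the argument that is implicitly being invoked; it is more than the paper itself provides.
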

This Lemma will be the key element for proving the following main result:

\begin{thm}
	Let Assumptions \ref{as:dualitygap}-\ref{as:stepsize} hold. Then,
	$\bm x_i[t]$ and $\bm \mu_i[t] $, updated by the rules in \eqref{eq:algy} - \eqref{eq:algmu}, converge to an optimal primal dual pair $(\bm{{x}}^*, \bm{{\mu}}^*_i ) \in \mathcal{X}\times \mathcal{M}$ for $i = [n]$ as $t \rightarrow \infty$.
	
\end{thm}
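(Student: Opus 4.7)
The plan is to cast the one-step inequality of the preceding proposition into the format of Lemma~\ref{lemma:sequences}, use the resulting summability to extract a subsequence converging to an optimal pair, and finally bootstrap subsequential convergence to full convergence by re-invoking the proposition at the subsequential limit.

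For any fixed optimal primal-dual pair $(\bm x^*,\bm \mu^*)$ with $\bm \mu_i^*\in\mathcal{M}_i$, set
\[V_t := \sum_{i=1}^n\bigl(y_i[t]\|\bm x_i[t]-\bm x^*\|^2+\|\bm \mu_i[t]-\bm \mu_i^*\|^2\bigr).\]
The proposition reads $V_{t+1}\le V_t-u_t+c_t$ for $t\ge t_0$, with
\[u_t=2\alpha_t\bigl[\La(\bar{\bm x}[t],\bm \mu^*)-\La(\bm x^*,\bm \mu^*)+\La(\bm x^*,\bm \mu^*)-\La(\bm x^*,\bm \mu[t])\bigr]\ge 0\]
by the saddle-point property guaranteed by Assumption~\ref{as:dualitygap}, and $c_t=2\alpha_t L_{\bm x} n\sum_i\|\bm x_i[t]-\bar{\bm x}[t]\|+\alpha_t^2\bigl(L_{\bm x}^2\sum_i w_i^{-1}+\sum_i L_{\bm \mu_i}^2\bigr)$. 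Lemma~\ref{lemma:average}(b) together with Assumption~\ref{as:stepsize}(c) yields $\sum_t c_t<\infty$, so Lemma~\ref{lemma:sequences} with $b_t\equiv 0$ delivers a finite limit $V_\infty:=\lim_t V_t$ and $\sum_t u_t<\infty$.

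Since $\sum_t\alpha_t=\infty$ while $u_t$ is an $\alpha_t$-scaled sum of two non-negative terms whose $\alpha_t$-weighted series are separately finite, there is a common subsequence $\{t_k\}$ on which both brackets in $u_t$ tend to $0$. Strong convexity of $F$ transfers to $\La(\cdot,\bm \mu^*)=F+\sum_i\bm \mu_i^{*T}\bm g_i$ because $\bm \mu_i^*\succeq 0$ and the $\bm g_i$ are convex, so $\bm x^*$ is the unique minimizer of $\La(\cdot,\bm \mu^*)$ on $\mathcal{X}$ and vanishing of the primal bracket forces $\bar{\bm x}[t_k]\to\bm x^*$. Lemma~\ref{lemma:average}(a) propagates this to $\bm x_i[t_k]\to\bm x^*$ for every $i$, and compactness of $\mathcal{M}_i$ lets us pass to a further subsequence $\{t_{k_l}\}$ on which $\bm \mu_i[t_{k_l}]\to\tilde{\bm \mu}_i$.

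It remains to verify that $(\bm x^*,\tilde{\bm \mu})$ is itself a saddle point, so that the proposition applies to this new pair. Continuity and vanishing of the dual bracket give $\La(\bm x^*,\tilde{\bm \mu})=\La(\bm x^*,\bm \mu^*)=F^*$, whence $\sum_i\tilde{\bm \mu}_i^T\bm g_i(\bm x^*)=0$; combined with $\tilde{\bm \mu}_i\succeq 0$ and $\bm g_i(\bm x^*)\preceq 0$ this forces componentwise complementary slackness, and together with the convex structure of the dual optimal set it identifies $\tilde{\bm \mu}$ as a dual optimum paired with $\bm x^*$. Re-applying the proposition at $(\bm x^*,\tilde{\bm \mu})$ makes the analogous quantity $V_t(\bm x^*,\tilde{\bm \mu}):=\sum_i\bigl(y_i[t]\|\bm x_i[t]-\bm x^*\|^2+\|\bm \mu_i[t]-\tilde{\bm \mu}_i\|^2\bigr)$ converge to a finite limit; along $\{t_{k_l}\}$ that limit is $0$, so $V_t(\bm x^*,\tilde{\bm \mu})\to 0$, and using $y_i[t]\to w_i>0$ this delivers $\bm x_i[t]\to\bm x^*$ and $\bm \mu_i[t]\to\tilde{\bm \mu}_i$ for every $i$. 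The main obstacle I expect is precisely the dual-optimality certification for the subsequential limit $\tilde{\bm \mu}$: because $\La(\bm x^*,\cdot)$ is merely affine in $\bm \mu$, matching of function values alone does not single out an optimal dual, so one must exploit complementary slackness and the convex geometry of the dual solution set to close the bootstrap step and to legitimately reuse the proposition at the new pair.
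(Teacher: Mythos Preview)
Your proposal tracks the paper's proof closely: cast the proposition into the Robbins--Siegmund format of Lemma~\ref{lemma:sequences}, obtain convergence of $V_t$ and $\sum_t u_t<\infty$, extract a subsequence on which both saddle-gap brackets vanish, and use strong convexity of $\La(\cdot,\bm\mu^*)$ together with Lemma~\ref{lemma:average}(a) to force $\bm x_i[t_k]\to\bm x^*$. The only divergence is in the closing step: the paper asserts that the subsequential dual limit coincides with the originally fixed $\bm\mu^*$ and reads off $\delta=\lim_t V_t=0$ directly, whereas you bootstrap by re-applying the proposition at the limit pair $(\bm x^*,\tilde{\bm\mu})$.

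Your bootstrap is the standard device when dual optima need not be unique, and you are right to flag the saddle-point certification of $\tilde{\bm\mu}$ as the crux. Note, however, that complementary slackness alone is not sufficient: to re-run Lemma~\ref{lemma:sequences} at $(\bm x^*,\tilde{\bm\mu})$ you need the new $u_t$ to be nonnegative, and for the primal bracket this requires $\bm x^*\in\arg\min_{\bm x\in\mathcal X}\La(\bm x,\tilde{\bm\mu})$, i.e.\ Lagrangian stationarity at $\tilde{\bm\mu}$, which does not follow from complementary slackness by itself. The paper's own argument at this juncture is equally terse (its conclusion $(\hat{\bm x},\hat{\bm\mu})=(\bm x^*,\bm\mu^*)$ implicitly leans on dual uniqueness), so your proof is at the same level of rigor and your explicit identification of the obstacle is a plus.
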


\begin{proof}
	To apply Lemma \ref{lemma:sequences} let us define
	\begin{align*}
		v_t &=  \sum_{i=1}^n \left(  y_i[t] || \bm x_i[t] - \bm x^* ||^2  + ||\bm \mu_i[t] - \bm \mu_i^*||^2 \right),  \\
		u_t &= 2 \alpha_t ( \La(\bm {\bar{x}}[t], \bm \mu^*) - \La(\bm x^*, \bm \mu^*)  + \La(\bm x^*, \bm \mu^*) \\
		& - \La(\bm x^*, \bm \mu[t])), \\
		c_t &=  2 \alpha_t L_{\bm{x}} n  \sum_{i=1}^n || \bm x_i[t]-\bar{\bm x}[t]|| + L_{\bm x}^2 \alpha_t^2 \sum_{i=1}^n \frac{1}{w_i} \\ 
		&+\alpha_t^2 \sum_{i=1}^n L_{\bm \mu_i},\\
		b_t  &= 0.
	\end{align*}
	For showing that $c_t$ is sumable, we recall from Lemma \ref{lemma:average}b) that, under given assumptions, $\sum_{t=0}^\infty \alpha_t || \bm x_i[t] - \bar{\bm x}[t]|| < \infty$. Therefore,
	\begin{equation*}
		2 L_x n  \sum_{t=0}^\infty  \alpha_t \sum_{i=1}^n ||\bm x_i[t] - \bar{\bm x}[t]|| < \infty.
	\end{equation*}
	By Assumption \ref{as:stepsize}, we directly have $ \left( L_{\bm x}^2  \sum_{i=1}^n \left( \frac{1}{w_i}  + L_{\bm \mu_i} \right) \right)  \sum_{t=0}^\infty \alpha_t^2 < \infty$. Together, this results in 
	\begin{equation*}
		\sum_{t=0}^\infty c_t < \infty.
	\end{equation*}
	Applying Lemma \ref{lemma:sequences}, we can then make the statements
	\begin{align}
		\exists \delta, \lim_{t \rightarrow \infty} v_t &=  \lim_{t \rightarrow \infty} \sum_{i=1}^n \left(  y_i[t] || \bm x_i[t]\!-\!\bm x^* ||^2  + ||\bm \mu_i[t]\!-\!\bm \mu_i^*||^2 \right)  \nonumber \\
		&= \delta \geq 0, \label{eq:limvt}\\
		\sum_{t=0}^\infty  u_t &=  \sum_{t=0}^\infty  2 \alpha_t ( \La(\bm {\bar{x}}[t], \bm \mu^*) - \La(\bm x^*, \bm \mu^*) \nonumber \\
		&+ \La(\bm x^*, \bm \mu^*) - \La(\bm x^*, \bm \mu[t])  < \infty. \label{eq:sumut}
	\end{align}
	Because the sum of the step-size is infinite, $\sum_{t=0}^{\infty} \alpha_t  = \infty$, by assumption, there need to exist subsequences $\bm x[t_l], \bm \mu_i[t_l]$, such that
	\begin{align*}
		\lim_{l \rightarrow \infty}\!\La(\bm {\bar{x}}[t_l], \bm \mu^*)\!-\!\La(\bm x^*, \bm \mu^*)\!+\!\La(\bm x^*, \bm \mu^*)\!-\!\La(\bm x^*, \bm \mu[t_l]) = 0
	\end{align*}
	Because $\La(\bm {x}[t], \bm \mu[t])$ is affine for fixed $\bm {x}[t] = \bm x^*$ and convex for fixed $\bm {\mu}_i[t]= \bm \mu_i^*$, it holds  that, $\forall t_l$,
	\begin{align*}
		\La(\bm {\bar{x}}[t_l], \bm \mu^*)\!-\!\La(\bm x^*, \bm \mu^*) \geq 0,\La(\bm x^*, \bm \mu^*)\!-\!\La(\bm x^*, \bm \mu[t_l])\geq 0.
	\end{align*}
	Therefore, the limit holds, if and only if
	\begin{align*}
		\lim_{l \rightarrow \infty} & \La(\bm {\bar{x}}[t_l], \bm \mu^*) \!=\!\La(\bm x^*, \bm \mu^*), \\
			\lim_{l \rightarrow \infty} &\La(\bm x^*, \bm \mu[t_l])\!=\!\La(\bm x^*, \bm \mu^*).
	\end{align*}
	Following from convergence of $v_t$ to some constant $\delta$, the subsequences $\bm {\bar{x}}[t_l]$ and $\bm \mu[t_l]$ are bounded. With that, we can choose convergent subsequences $\bm {\bar{x}}[t_{l_s}]$ and $\bm \mu[t_{l_s}]$, such that $\lim_{s \rightarrow \infty} (\bm {\bar{x}}[t_{l_s}], \bm \mu[t_{l_s}]) = (\bm{\hat{x}}, \bm{\hat{\mu}}) \in \mathcal{X} \times \mathcal{M}$, since $\mathcal{X}$ and $\mathcal{M}$ are closed. Therefore, it holds that
	\begin{align*}
		\lim_{s \rightarrow \infty} \La(\bm {\bar{x}}[t_{l_s}], \bm \mu^*)  &= \La(\bm{\hat{x}}, \bm \mu^*)  = \La(\bm x^*, \bm \mu^*) \text{ and } \\
		\lim_{s \rightarrow \infty} \La(\bm x^*, \bm \mu[t_{l_s}])  &=  \La(\bm x^*,  \bm{\hat{\mu}})  = \La(\bm x^*, \bm \mu^*).
	\end{align*}
	Resulting from the strong convexity of $F(\bm x)$ over $\mathcal{X}$, the equality $\La(\bm{\hat{x}}, \bm \mu^*)  = \La(\bm x^*, \bm \mu^*) = \min_{\bm x} \La(\bm x, \bm \mu^*)$ implies that $\bm{\hat{x}} = \bm{x}^*$. Due to dual feasibility of $\bm{\hat{\mu}}$,  $\bm{\hat{x}} = \bm{x}^*$ and $\La(\bm x^*,  \bm{\hat{\mu}})  = \La(\bm x^*, \bm \mu^*) = \max_{\bm \mu \geq 0} \La(\bm x^*, \bm \mu)$, it is implied that $(\bm{\hat{x}}, \bm{\hat{\mu}}) = (\bm x^*, \bm \mu^*)$. Next, taking into account Lemma~\ref{lemma:average}a) and  \eqref{eq:limvt}, we obtain
	\begin{align*}
		\delta &=  \lim_{t \rightarrow \infty} \sum_{i=1}^n \left(  y_i[t] || \bm x_i[t] - \bm x^* ||^2  + ||\bm \mu_i[t] - \bm \mu_i^*||^2 \right)  \cr
		& =\lim_{s \rightarrow \infty} \sum_{i=1}^n \left(  y_i[t_{l_s}] || \bar{\bm x}[t_{l_s}] - \bm x^* ||^2  + ||\bm \mu_i[t_{l_s}] - \bm \mu_i^*||^2 \right) \cr
		& = 0.
	\end{align*}
	Finally, as $y_i[t]>0$ for all $t$, we conclude
	\begin{align*}
		&\lim_{t\to\infty}|| \bm x_i[t] - \bm x^* || =0, \cr
		&\lim_{t\to\infty}|| \bm \mu_i[t] - \bm \mu_i^* || =0 \,\mbox{ for $i=[n]$}.
	\end{align*}


\end{proof}

\section{Simulation}\label{sec:simulation}
As motivated in the introduction, we consider an economic dispatch problem as an example application. In this problem, a group of networked generators seeks to fulfill some predefined demand $D$ while minimizing their summed up local cost functions, which are assumed to take a quadratic form. Formally, the problem can be defined by
\begin{subequations}\label{prob:edp}
	\begin{align}
		\min_{\bm p}& \sum_{i=1}^n C_i(p_i) = \min_{\bm p} \sum_{i=1}^n a_i p_i^2 + b_i p_i + c_i, \label{eq:generatorcost} \\
		\text{s. t. } &\sum_{i=1}^n p_i = D, \label{eq:powerbalance} \\
		& p_{i, \text{min}} \leq p_i \leq  p_{i,\text{max}}, \ \forall i \in [n] \label{eq:generatorlimits}.
	\end{align}
\end{subequations}
The power balance constraint \eqref{eq:powerbalance} consists of the sum of all local decision variables $p_i$ and is therefore part of the global constraint set. Furthermore, we add the technical constraint that all power outputs of the generators should be positive $\bm p \geq \bm 0$, defining the global constraint set as $\mathcal{P} = \lbrace \bm p | \bm p \geq \bm 0, \sum_{i=1}^n p_i = D \rbrace$. Thereby, we ensure that $\mathcal{P}$ is closed and bounded and therefore compact, satisfying Assumption \ref{as:set}. Furthermore, this provides that the projection onto   $\mathcal P$  binds the generation $p_i$, because, regardless of other $p_j, j =  [n] \neq i$, it holds that $0 \leq p_i \leq D$.  The lower and upper power limits of each generator in constraint \eqref{eq:generatorlimits} allocate the local part of the constraint set and are therefore exclusively known by the respective agent $i$. \\
The solution set is non-empty if
\begin{equation*}
	\sum_{i=1}^n p_{\text{min},i} \leq D \leq 	\sum_{i=1}^n p_{\text{max},i}.
\end{equation*}
Therefore, there exists at least one relative interior point for which the affine equality and inequality constraints are fulfilled, which satisfies the Slater constraint qualification. Together with the strong convexity of the cost functions, we conclude that Assumption \ref{as:dualitygap} is given for this problem.\\
The local Lagrangians of the problem takes the form
\begin{align*}
	\La(p_i, \bm \mu_i) &= 	a_i p_i^2 + b_i p_i + c_i  \\
	&+ \mu_{i,1}(\pimin - p_i) + \mu_{i,2}(p_i - \pimax)
\end{align*}
To check, whether Assumption \ref{as:gradbound} is satisfied, the gradients of the local Lagrangians are inspected. First, we realize that $\bm p_i[t] \equiv \bm z_i[t]$ of equation \eqref{eq:algz} is uniformly bounded in every time step. This results from projecting the gradient update onto the convex and compact set $\mathcal{P}$ and communicating the results ($\bm x[t+1]$ in equation \eqref{eq:algx}) in the next time step via a row-stochastic communication matrix, such that the result lies inside the convex hull spanned by $\bm x[t+1]$ and therefore lies inside of $\mathcal{P}$. Using this result and the fact that Slater's constraint qualification is given, which provides us with uniform bounds on  $\bm \mu_i$ (see Remark \ref{rem:slater}), we can conclude that both  $\nabla_{\bm p}\La(p_i, \bm \mu_i)$ and  $\nabla_{\bm \mu_i}\La(p_i, \bm \mu_i)$ can be uniformly bounded. Together with the strong convexity of the cost functions \eqref{eq:generatorcost}, we conclude that Assumption \ref{as:gradbound} is given for Problem \eqref{prob:edp}.\\
We chose a simple setup of four generators and designed the demand $D$ and generator limits such that above equation is true. Each agent $i$ maintains an estimation $\bm p_i$, containing all decision variables, i.e. $\bm p_i = (p_i)_{i=1}^n$. The agents were connected by a static, strongly connected graph, such that Assumption \ref{as:graph} is satisfied.
At last, the step size sequence was chosen by a grid search according to Assumption \ref{as:stepsize} with $a_t = 15/t^{0.60}$. \\
\begin{figure}[]
	\centering
	\scalebox{0.9}{
		\input{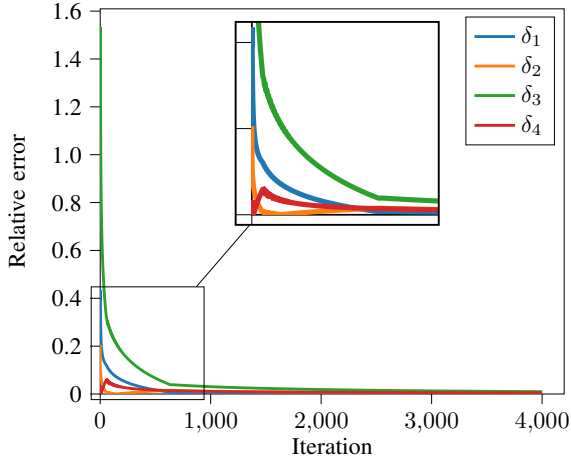} }
	\caption{Convergence of relative error with step size $a_t = 15/t^{0.60}$. Largest error after 4000 iterations: $0.01$. }
	\label{fig:error}
\end{figure}
In Figure \ref{fig:error}, the convergence of the relative errors $\delta_k = |p_k - p_k^*|/p_k^*$, $k = [4]$ of one example agent are depicted. Three of the four states approach 0 already after 500 iterations, while the error $\delta_3$ shows slow convergence over several hundred iterations. After 1000 iterations, all errors are below $0.04$ and after 4000 iterations the highest relative error in the agent system is lower than $0.01$.

\section{Conclusion} \label{sec:conclusion}
In the work at hand, we tailored a solution method to a class of distributed, convex optimization problems that need to respect both global and local constraints. Each agent projects its gradient update onto the global set while updating a Lagrange parameter, over which their local constraints are added to the cost function. Convergence to the optimal value  was proven and some convergence properties shortly discussed by an example of the economic dispatch problem. Future work will include augmenting the method for time-varying communication architectures in order to make the algorithm more robust against failing communication channels.

%

\bibliography{cdcbib}
\bibliographystyle{plain}

\end{document}